\documentclass[conference]{IEEEtran}


\setlength{\skip\footins}{0.08cm}
%


%

%
\usepackage[dvips]{color}
\usepackage{epsf}
\usepackage{times}
\usepackage{epsfig}
\usepackage{graphicx}
\usepackage{amsmath}
\usepackage{amssymb}
\usepackage{amsxtra}
\usepackage{here}
\usepackage{rawfonts}
\usepackage{times}
\usepackage{url}
\usepackage{cite}
\usepackage{float}

\newtheorem{theorem}{\bf Theorem}

\newtheorem{lemma}{\bf Lemma}

\newtheorem{definition}{\bf Definition}

\topmargin=-0.2in \oddsidemargin -0.5in \textwidth=7.3in
\textheight=9.4in
\DeclareMathOperator*{\argmax}{arg\,max}

\newlength{\aligntop}
\setlength{\aligntop}{-0.6em}
\newlength{\alignbot}
\setlength{\alignbot}{-0.85\baselineskip}
\addtolength{\alignbot}{-0.2em}
\makeatletter
\renewenvironment{align}{%
  \vspace{\aligntop}
  \start@align\@ne\st@rredfalse\m@ne
}{%
  \math@cr \black@\totwidth@
  \egroup
  \ifingather@
    \restorealignstate@
    \egroup
    \nonumber
    \ifnum0=`{\fi\iffalse}\fi
  \else
    $$%
  \fi
  \ignorespacesafterend%
  \vspace{\alignbot}\par\noindent
}

\IEEEoverridecommandlockouts
\begin{document}
\title{\huge A\! Game-Theoretic\! Approach\! to\! Energy\! Trading\! in\! the\! Smart\! Grid\vspace{-0.4cm}}
\author{\authorblockN{Yunpeng Wang$^\textbf{1}$, Walid Saad$^2$, Zhu Han$^3$, H. Vincent Poor$^4$, and Tamer Ba\c{s}ar$^5$}
\authorblockA{\small
$^\textbf{1,2}$ Electrical and Computer Engineering Department, University of Miami, Coral Gables, FL 33146\\Emails: \url{y.wang68@umiami.edu,walid@miami.edu}\\
$^3$ Electrical and Computer Engineering Department, University of Houston, Houston, TX, USA, Email: \url{zhan2@uh.edu}\\
$^4$ Electrical Engineering Department, Princeton University, Princeton, NJ, USA, Email: \url{poor@princeton.edu}\\
$^5$ Coordinated Science Laboratory, University of Illinois at Urbana-Champaign, IL, USA, Email: \url{basar1@illinois.edu}\vspace{-1cm}
 }%
\thanks{This research was supported by CenterPoint, US NSF CNS-1265268, and CNS-0953377, the Air Force Office of Scientific Research under MURI Grant FA9550-09-1-0643 and  the U.S. Air Force Office of Scientific Research (AFOSR) MURI grant FA9550-10-1-0573. A preliminary version of this work was presented in [35].}
 }
\date{}
\maketitle

\begin{abstract}
Electric storage units constitute a key element in the emerging smart grid system. In this paper, the interactions and energy trading decisions of a number of geographically distributed storage units are studied using a novel framework based on game theory. In particular, a noncooperative game is formulated between  storage units, such as PHEVs, or an array of batteries that are trading their stored energy. Here, each storage unit's owner can decide on the maximum amount of energy to sell in a local market so as to maximize a utility that reflects the tradeoff between the revenues from energy trading and the accompanying costs. Then in this energy exchange market between the storage units and the smart grid elements, the price at which energy is traded is determined via an auction mechanism. The game is shown to admit at least one Nash equilibrium and a novel proposed algorithm that is guaranteed to reach such an equilibrium point is proposed. Simulation results show that the proposed approach yields significant performance improvements, in terms of the average utility per storage unit, reaching up to $130.2\%$ compared to a conventional greedy approach. \vspace{-0cm}
\end{abstract}
\begin{keywords}
Electric storage unit, noncooperative games, double auctions, energy management.
\end{keywords}\vspace{-0.3cm}

\section{Introduction}\vspace{-0.1cm}

Modernizing the electric power grid and realizing the vision of a ``smart grid'' is contingent upon the deployment of novel smart grid elements such as renewable energy sources and energy storage units~\cite{farhangi2010path}. In this respect, electric storage units are inherently devices that can store energy, or extra electricity available at participating customers. Deployment of storage units in future smart grid systems faces many challenges at different levels such as studying the impact of integrating storage units on the grid's operation, determining the required grid infrastructure (communication and control nodes) to enable smart energy exchange, and developing new power management strategies ~\cite{kazempour2009electric,hadjipaschalis2009overview, HosseinAkhavanHejazi, aguero2012integration, hossain2012smart}. The potential economic impact of deploying energy storage units was explored in \cite{lassila2012methodology}, which also studied the feasible level of energy storage in the distribution system. The possibility of having groups of controllable loads and sources of energy in power systems was investigated in \cite{hatziargyriou2007overview} which deployed a distribution network of solar panels or wind turbines. In \cite{thatte2012towards}, distributed resources are allocated by the provision of two-way energy flow and a unified, operational value proposition of energy storage is presented. Other related problems have assessed the advantages of deploying and maintaining storage units such as \cite{EW07, lindley2010smart, rastler2010electricity, lu2004pumped, garcia2008stochastic, sioshansi2009estimating, caralis2012role, diaz2012review, garnier2009integrated}.

One main challenge pertaining to introducing energy storage units within the smart grid is the analysis of the energy trading decision making processes involving complex interactions between the storage units (and their owners) and the various smart grid elements. A game theoretic approach to control individual sources/loads was adopted in  \cite{weaver2009game}, which enhanced the reliability and robustness of a power system without using central control. In \cite{EW02}, a new technique based on cooperative game theory is proposed to allow wind turbines to aggregate their generated energy and optimize their profits. Reactive power compensation was studied in \cite{xu2010research} using game theory with the objective of optimizing wind farm generation.  A strategic game model was developed in \cite{hobbs2000strategic} to analyze an oligopoly within an energy market with various grid-level constraints. Transmission expansion planning and generation expansion planning were studied through a dominant strategy using three game-theoretic levels in \cite{ng2006game}. Using an IEEE 30-bus test systems, a comprehensive approach to evaluate  electricity markets was presented in \cite{bompard2006network} to study the impact of various constraints on the market equilibrium. Developing a distributed energy storage system for transferring photovoltaic power to electric vehicles as well as introducing efficient power management schemes between storage units and the smart grid have been studied in \cite{aguero2012integration} and \cite{EW07}, respectively. However, little work seems to have been conducted, \emph{from the storage units' point of view}, on the energy exchange markets that arise due to the competition among a number of storage units, each of which could belong to a different customer and that can interact at different levels. Due to the promising outlook of introducing energy storage units in the smart grid, devising new schemes to model and analyze the competition accompanying such energy exchange markets is both challenging and desirable. 

The main contribution of this paper is to develop a new framework that enables a number of storage units belonging to different customers to individually and strategically choose the amount of stored energy that they wish to sell to customers in need of energy (e.g., other nodes or substations on the grid). Compared to related works on smart grid markets~\cite{EW07, lu2004pumped, bompard2006network, PH01, EW02}, our paper has several new contributions: \emph{1)} we design a novel double-auction market model that allows to incorporate power markets with multiple buyers and multiple sellers; \emph{2)} in contrast to the classical single shot, static auction models which assume that sellers have a constant amount to sell, we have developed here a novel framework that combines a double auction with a noncooperative game allowing the sellers to strategically decide on the amount they put for sale depending on the current market state, thus, yielding a dynamic pricing mechanism; \emph{3)} we have developed new results on the existence of a Nash equilibrium for games that exhibit a discontinuity in the utility function due to the presence of an underlying auction model, unlike the classical models that often assume continuous utilities, and \emph{4)} we proposed a new learning algorithm that is guaranteed to reach an equilibrium for a game with two levels of interactions: a market based on auction theory and a noncooperative game. We are particularly interested in overcoming two key challenges: (a) introducing a new approach using which the storage units can smartly decide on the energy amount to sell while taking into account the effect of these decisions on both their utilities and the energy trading price in the market, and (b) developing and analyzing a mechanism to characterize the trading price of the energy trading market that involves the storage units and the potential energy buyers in the grid. To this end, we model the competition between a number of storage units that are seeking to sell their surplus of energy as a noncooperative game in which the strategy of each unit is to select a self-profitable amount of energy surplus to sell so as to optimize a utility function that captures the tradeoff between the economical benefits of trading energy and the related costs (e.g., battery life reduction, storage unit efficiency, or other practical aspects). Then, a double auction mechanism is proposed to determine the trading price that potentially governs the market resulting from the interactions between the storage units and energy buyers. This mechanism is shown to be strategy-proof such that each buyer or seller has an incentive to be truthful in its reservation bids or prices. For the studied game, we show the existence of at least one Nash equilibrium and we propose a novel algorithm to find a Nash equilibrium of the game. Subsequently, we also show the convergence of the proposed algorithm. Extensive simulations are run to evaluate and assess the performance of the proposed game-theoretic approach.

The remainder of the paper is organized as follows: Section~\ref{sec:sysmodel} presents the studied system model trading. In Section~\ref{sec:game}, we formulate the game and develop the underlying auction mechanism. In Section~\ref{sec:algo}, we introduce the concept of the best response and describe our proposed algorithm. Simulation results are presented in Section~\ref{sec:sim}, while conclusions are drawn in Section~\ref{sec:conc}.\vspace{-0.2cm}

\section{System Model}\label{sec:sysmodel}\vspace{-0.1cm}
Consider a smart grid system having a number of nodes that are in need of energy. These nodes could represent substations and/or distributed energy sources that are servicing an area or group of consumers (e.g., loads, pumped-storage in hydro plants). Here, we consider that a certain number, $K$, of the smart grid elements is unable to meet their demand due to factors such as intermittent generation and varying consumption levels at the grid's loads. In this respect, such $K$ grid elements must find alternative sources of energy by acquiring this energy from other elements that have an excess of energy stored in an energy storage unit. Thus, we consider that a number, $N$, of storage units are deployed in the grid. In particular, all these $N$ units belong to customers that have an excess of energy that they wish to sell. We let $\mathcal{N}$ and $\mathcal{K}$ denote, respectively, the sets of all $N$ sellers and all $K$ buyers. In what follows, we use \emph{seller} to imply any storage unit $i\in \mathcal{N}$ and \emph{buyer} to imply any smart grid element $k \in \mathcal{K}$. Our model generally involves several types of electricity sellers and buyers.

Each buyer $k \in \mathcal{K}$ has a maximum unit price or reservation bid $b_k$ at which it is willing to participate in an energy trade with a seller. Since we focus on the storage units' perspective of the market, we assume that the buyers wish to buy a fixed amount of energy $x_k$. This models a scenario in which, over a certain given time period, $x_k$ is imposed on the buyers from the practical energy requirements of the users and customers. We can also view $x_k$ as an \emph{average} value of the amount of additional energy demand that buyer $k$ foresees for a certain period of time. For the storage units, i.e., the sellers, each unit $i \in \mathcal{N}$ can chose an amount of energy $a_i$ to sell such that:
 \begin{align}
a_i \le B_i \triangleq (C_{i,\textrm{max}} - D_i)
 \end{align}
with $B_i$ being the maximum total energy that seller $i$ wants to sell in the market, $C_{i,\textrm{max}}$ being the maximum storage unit capacity, and $D_i$ being the energy that each storage unit $i$ wants to keep and is not interested in selling. For each seller $i$, we define a reservation price $s_i$ per unit energy sold, under which seller $i$ will not trade energy. 
\begin{figure}[!t]
  \begin{center}
   \vspace{-0.2cm}
    \includegraphics[width=7cm]{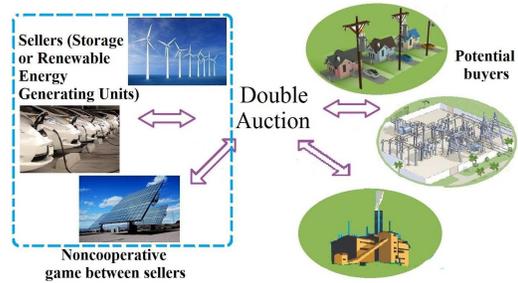}
    \vspace{-0.4cm}
    \caption{\label{fig:fig0} An illustrative example of the model studied.}
  \end{center}\vspace{-1cm}
\end{figure}

Given these buying and selling profiles of the various grid elements, an energy exchange market is set up in which the buyers seek to acquire energy so as to meet their demand while the sellers, i.e., the storage units and their owners, seek to collect revenues from selling their extra energy surplus. Here, all $N$ sellers and $K$ buyers will interact so as to determine various energy trading properties that include the quantities exchanged and the price at which energy is traded. Unlike conventional markets in which the sellers only control the reservation prices, in our model, the storage units can also strategically choose the maximum quantity of energy $a_i$ that they want to put for sale in the market. The choice of a proper $a_i$ is directly dependent on an inherent tradeoff between the potential profits that the sellers foresees and the accompanying costs that relate to the physical characteristics of the storage devices. Indeed, such a tradeoff is a byproduct of the fact that frequently charging or discharging storage devices is costly as it can lead to a reduction in the storage device's lifespan as well as to other practical costs~\cite{PH00,PH01,hossain2012smart}. Hence, given the buyers' set of bids and energy requirements, the maximum energy amount $a_i$ that any seller $i$ decides to trade strongly affects both the gains/revenues and cost of every storage unit in $\mathcal{N}$ as well as the trading price. Fig.~\ref{fig:fig0} provides an illustrative example of the model  considered. To analyze such an energy exchange market, we next propose a new framework that builds on the powerful analytical tools of game theory and auction theory. \vspace{-0.1cm}

\section{A Game-Theoretic Approach to Energy Trading}\label{sec:game}\vspace{-0.1cm}
In this section, we first formulate a noncooperative game between the sellers, and then study the proposed energy trading mechanism using a double auction, also discussing its various properties. The main notation is listed in Table~\ref{tab:symbol}.

\begin{table}[!t]\vspace{-0.2cm}
\small
  \centering
 \caption{
    \vspace*{-0.1cm}Summary of Notations}\vspace{-0.4cm}
\begin{tabular} {c||c}
Symbols & Description\\ \hline\hline 
$N$ & the total number of sellers\\
$K$ & the total number of buyers\\
$i,j$ & the serial number of sellers\\
$k$ & the serial number of buyers\\
$s_i$ & the reservation price of seller $i$\\
$b_k$ & the reservation bid of buyer $k$\\
$a_i$ & the action of seller $i$\\
$x_k$ & the demand of buyer $k$\\
$L$ & the total number of participating sellers\\
$M$ & the total number of participating buyers \\
$U_i$ & the utility of seller $i$\\
$\boldsymbol{a}_{-i}$ & the sellers' actions except seller $i$\\
$p$ & the trading price\\
$q$ &the sold energy \\
$Q$ & the energy exchange function of action\\
$\beta$ &the oversupply energy \\
\end{tabular}\label{tab:symbol}\vspace{-0.6cm}
\end{table}

\subsection{Noncooperative Game Model}\vspace{-0.1cm}
The complex interactions and decision making processes of the storage units are analyzed using the analytical tools of noncooperative game theory~\cite{GT00}.
In particular, we formulate a noncooperative game in normal form, $\Xi=\{\mathcal{N}, \{\mathcal{A}_i\}_{i\in\mathcal{N}}, \{U_i\}_{i\in\mathcal{N}}\}$, that is characterized by three main elements: (a) a set $\mathcal{N}$ of sellers or \emph{players}, (b) action or \emph{strategy} of each player $i \in \mathcal{N}$ which maps to an amount of energy, $a_i \in \mathcal{A}_i := [0,B_i]$, that will be sold, and (c) a \emph{utility function} $U_i$ of each seller $i \in \mathcal{N}$ which reflects the gains and costs from trading and selling energy. Before defining the utility functions, we note that, in the game $\Xi$, the reservation price $s_i$ is not included as part of seller $i$'s strategy space.  This implies that the sellers must reveal their correct reservation price when participating in the game. This consideration is motivated by the fact that, when we determine the market's trading price, as explained in the next section, we will develop a \emph{truthful} and strategy-proof double auction mechanism that guarantees that no buyer or seller can benefit by cheating or changing its true reservation price or bid.

Given a certain strategy choice $a_i$ by any storage unit $i\in \mathcal{N}$, the utility function can be characterized by:
\begin{equation}\label{eq:util1}
U_i(a_i,\boldsymbol{a}_{-i})=  \sum_{k \in \mathcal{K}} (p_{ik}(\boldsymbol{a})-s_i)q_{ik}(\boldsymbol{a}) -  f\left(\sum_{k \in \mathcal{K}}q_{ik}(\boldsymbol{a}) \right),
\end{equation}
where $\boldsymbol{a}$ is the $N\times 1$ vector of all strategy selections, $\boldsymbol{a}_{-i}:=[a_1,a_2,\ldots,a_{i-1},a_{i+1},\ldots,a_N]^T$ is the vector of actions selected by the opponents of storage unit $i$, $p_{ik}(\boldsymbol{a})$ is the price at which energy is traded between seller $i$ and buyer $k$,  $q_{ik}$ is the quantity of energy exchanged from seller $i$ to buyer $k$, and $f(\cdot)$ is a function that reflects the cost of selling energy. As previously mentioned, these costs depend on numerous factors such as the physical type of the storage unit or the amount of time the unit is put into charging or discharging modes. Moreover, we note that $f$ must be an increasing function of the amount, $\sum_{k \in \mathcal{K}}q_{ik}(\boldsymbol{a})$, sold in total by storage unit $i$. Here, the utility in (\ref{eq:util1}) is also a function of the amount $x_k$ of energy that every buyer $k\in \mathcal{K}$ must buy and of the buyers' reservation bids. However, for notational convenience, we have dropped this dependence.

The goal of each storage unit $i$ is to choose a strategy $a_i \in \mathcal{A}_i$ in order to maximize its utility as given in (\ref{eq:util1}). For characterizing a desirable outcome for the studied game $\Xi$, one must derive a suitable solution for all $N$ optimization problems that the sellers need to solve. We can first see that, in (\ref{eq:util1}), every vector of strategies $\boldsymbol{a}$ selected by the sellers will yield different trading prices $p_{ik}(\boldsymbol{a})$. These prices are also a function of the reservation prices of the sellers, the quantity bought, and the reservation bids of the buyers or grid elements. Thus, prior to finding a solution for the energy exchange game, we will first introduce a scheme for characterizing the trading price.\vspace{-0.1cm}

\subsection{Double Auction Mechanism for Market Analysis}\label{sec:da}\vspace{-0.1cm}
The formulated game is useful to study the sellers' interactions. However, in order to find the prices at which energy is traded, we must define suitable mechanisms using the rich tools of double auctions~\cite{DR00} and \cite{DR01}. Inherently, a double auction is a suitable representation for a trading market that involves multiple sellers and multiple buyers. For the proposed game $\Xi$, applying a double auction is needed so as to derive the trading prices, the quantities of energy traded, as well as the number of involved sellers and buyers, given the chosen strategy vector $\boldsymbol{a}$ (maximum quantities offered for sale), the reservation prices $s_i,\ \forall i \in \mathcal{N}$, the quantities $x_k$ to be bought, and the bids $b_k,\ \forall k \in\mathcal{K}$. 

When dealing with a double auction, the buyers and sellers have to decide on whether to be truthful about their reservation bids and prices, given: (i) the potential utility that they will obtain as captured by the first term of (\ref{eq:util1}), and (ii) the buyers' potential savings $\sum_{i \in \mathcal{N}} (b_k - p_{ik})q_{ik}$ with $q_{ik}$ being the quantity bought by $k$ from $i$. Here, our emphasis is on having a double auction mechanism that yields, for any  $\boldsymbol{a}$, a solution that is \emph{truthful and strategy-proof}. A truthful auction is a scheme where no seller $i \in \mathcal{N}$ can benefit by cheating about its reservation price such as by misreporting it to $s_i^\prime > s_i$ or $s_i^\prime < s_i$, and no buyer $k\in \mathcal{K}$ will gain by under-bidding $b_i^\prime < b_i$ or over-bidding $b_i^\prime > b_i$. A strategy-proof solution is of interest as it guarantees truthful reporting by all buyers and sellers.

With this in mind, for the proposed system, we develop a double auction scheme that follows from \cite{DR00} and \cite{DR01}. In this scheme, the first step is to sort the sellers in an \emph{increasing order} of their reservation prices such that, without loss of generality, we have:
\begin{align}\label{eq:sel}
s_1 < s_2 < \ldots < s_N.
\end{align}
The next step is to arrange the buyers in a decreasing order of their reservation bids, as follows:
\begin{align}\label{eq:buy}
b_1  > b_2 > \ldots > b_K.
\end{align}
We note that the orderings in (\ref{eq:sel}) and (\ref{eq:buy}) assume that whenever two buyers or sellers have equal reservation prices or bids, one can group them into a single, virtual buyer or seller.

Following this sorting process, the \emph{supply} curve (sellers' price $s_i$ as a function of the energy amount $a_i$, $\forall i\in \mathcal{N}$)  and the \emph{demand} curve (buyers' bids $b_k$ as a function of the amount of required energy $x_k$, for all $k\in \mathcal{K}$ can be generated. These two curves will subsequently \emph{intersect} at a point that corresponds to a given seller $L$ and a certain buyer $M$ with $b_M \ge s_L$. This intersection point is easily computed using known numerical and graphical techniques~\cite{DR00}. Once we determine the seller $L$ and buyer $M$ at the supply and demand intersection point, double auction theory implies that $L-1$ and $M-1$ buyers will practically participate in the market and the energy trading process. Here, as shown in \cite{DR01}, we must exclude seller $L$ and buyer $M$ from the market so as to guarantee that the total supply and demand will match while ensuring a strategy proof and truthful auction mechanism. However, if one does not need to maintain truthfulness, the proposed approach can easily be modified so as to allow seller $L$ and buyer $M$ to also trade energy.

Therefore, in order to match the supply and demand, all sellers whose indices are such that $i < L$ and all buyers such that $k < M$ will be part of the double auction trade. To determine the trading price, once the intersection is identified, one can select any suitable point within the interval $[s_L,b_M]$~\cite{DR00}. For our energy market, given a seller's strategy vector $\boldsymbol{a}$, we assume that all sellers $i < L$ and buyers $k<M$ will exchange energy at a price $\bar{p}(\boldsymbol{a})$ such that:
\begin{align}\label{eq:tradpr}
\bar{p}(\boldsymbol{a})=\frac{s_L + b_M}{2}.
\end{align}
Here, the price depends on $\boldsymbol{a}$ since, for every maximum energy to sell vector $\boldsymbol{a}$, the intersection point of demand and supply may occur at different $M$ and $L$.
\begin{figure}[!t]
\begin{center}\vspace{-0.2cm}
\includegraphics[width=7cm]{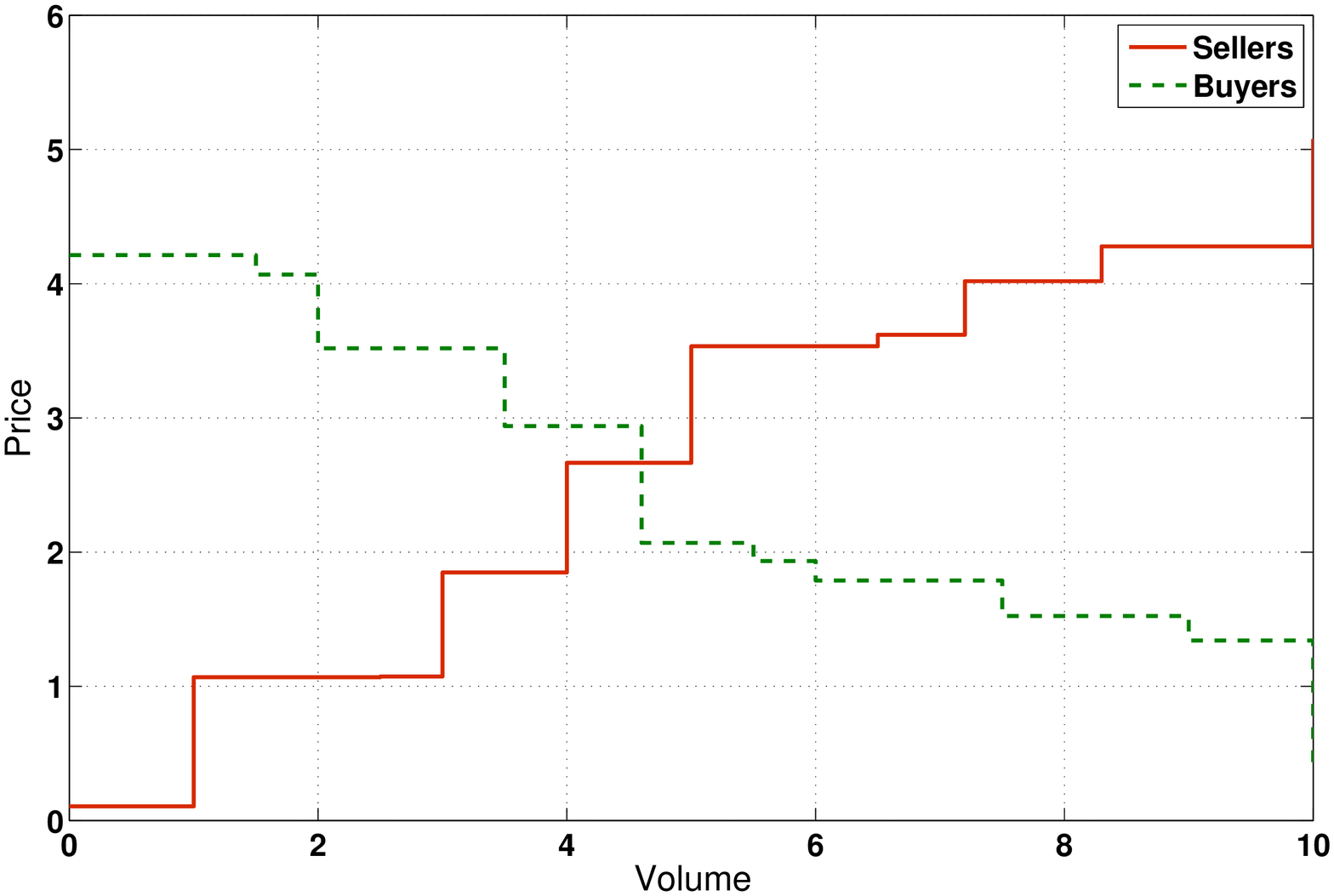}
\end{center}\vspace{-0.7cm}
\caption{An illustrative example on solving a double auction }\label{fig:intersect} \vspace{-0.8cm}
\end{figure}

This solution of the double auction is illustrated in Fig.~\ref{fig:intersect}, which shows the supply (solid line) and demand (dashed line) curves. The intersection of these two curves can be used to determine the trading price and the quantities. Once the trading price and quantities are determined from the double auction, the next step is to define a proper utility function and introduce the strategic operation for the proposed game.

Once the trading price is found, we need to find the amount of energy that is traded between the $L-1$ sellers and the $M-1$ buyers. First, given the unified trading price in (\ref{eq:tradpr}), the $L-1$ sellers will be indifferent between buyers. This implies that, at the double auction solution, each seller's utility in (\ref{eq:util1}) depends only on the quantity sold but not on the identity of the buyer who bought this amount. Hence, assuming that the cost function $f(\cdot)$ is quadratic, by using the proposed double auction, (\ref{eq:util1}) becomes:
 \begin{align}\label{eq:util}
U_i(a_i,\boldsymbol{a}_{-i})=  (\bar{p}(\boldsymbol{a})-s_i)Q_{i}(\boldsymbol{a})  -  \tau_i Q_{i}^2(\boldsymbol{a}),
\end{align}
with $Q_i(\boldsymbol{a})$ being the \emph{total quantity} of energy sold by  $i$ and $\tau_i$ being a penalty factor that weighs the costs reaped by storage unit $i$ when discharging/selling energy. Here, we must stress that our analysis can accommodate any type of cost functions $f$.

Once the auction is concluded, different approaches can be applied to find the quantity of energy traded between each of the $L-1$ participating sellers and $M-1$ participating buyers~\cite{DR00}. For our work, we will apply the technique of \cite{DR01} where the entire volume traded is divided in a way to maintain the truthfulness of the auction. Using this approach, the total amount $Q_i(\boldsymbol{a})$ that is sold by any storage unit $i$, for a given strategy vector $\boldsymbol{a}$ is:
 \begin{equation}\label{eq:rule}
 Q_i(\boldsymbol{a})=\begin{cases}
a_i &\textrm{ if } \sum_{k=1}^{M-1} x_k \ge \sum_{j=1}^{L-1} a_j,\\
(a_i -  \beta_i)^+ &\textrm{ if }  \sum_{k=1}^{M-1} x_k \le \sum_{j=1}^{L-1} a_j,
\end{cases}
 \end{equation}
where $(\alpha)^+:= \max (0, \alpha)$ and $\beta_i$ represents the fraction of the oversupply $\sum_{j=1}^{L-1} a_j-\sum_{k=1}^{M-1} x_k$ that is allotted to seller $i$. The mechanism in (\ref{eq:rule}) implies that whenever the total demand at the auction's outcome exceeds the supply, then every seller $i$ would sell all of the energy $a_i$ that it introduced to the market. However, when the total supply exceeds the total demand, then all sellers get an \emph{equal} share of the oversupply's burden. Here, $\beta_i = \frac{\sum_{j=1}^{L-1} a_j-\sum_{k=1}^{M-1} x_k}{L-1}$. Nonetheless, if, for a seller $i$, we have $\frac{( \sum_{j=1}^{L-1} a_j-\sum_{k=1}^{M-1} x_k)}{L-1} > a_i$, then, seller $i$ does not sell any energy as per the second case in (\ref{eq:rule}). The remaining ``oversupply'' $\frac{( \sum_{j=1}^{L-1} a_j-\sum_{k=1}^{M-1} x_k)}{L-1} - a_i$ of this seller is subsequently divided equally between the other $L-2$ sellers and the result is added to their share $\beta_j,\ j < L,\ j \neq i$. This scheme will be repeated as long as each seller sells a nonnegative quantity. Here, for the ``oversupply" case, the total energy put in the market by the participating sellers, $\sum_{j=1}^{L-1} a_j$, is greater than or equal to that requested by the buyers, $\sum_{k=1}^{M-1} x_k$, but the real energy obtained by seller $k$ is the expected energy, and thus mathematically, $\sum_i q_{ik} = x_k$. For the ``over-demand" case, the amount of energy requested by the buyers exceeds the amount put into the market by the sellers, that is, $\sum_i q_{ik} < x_k$. An analogous process can be carried out to find the amount bought by the grid's elements or buyers.  Using (\ref{eq:rule}), as shown in \cite{DR00} and \cite{DR01}, we will have:
\begin{lemma}
 In the proposed game $\Xi$, by using (\ref{eq:rule}), no seller or buyer benefits by cheating about its reservation price $s_i,\ \forall i\in\mathcal{N}$ or reservation bid $b_k,\ \forall k\in\mathcal{K}$. The double auction is thus \emph{strategy-proof or truthful}.\vspace{-0.2cm}
 \end{lemma}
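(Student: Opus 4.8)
The plan is to prove truthfulness for the sellers and for the buyers separately, exploiting the structural symmetry between the two sides of the auction (sellers sorted increasingly in $s_i$ versus buyers sorted decreasingly in $b_k$, with the excluded marginal pair $(L,M)$ playing dual roles): one detailed argument on the seller side then transfers \emph{mutatis mutandis} to the buyer side. Throughout, I would fix the strategy profile $\boldsymbol{a}$ (the game $\Xi$ only concerns the amounts $a_i$; the present claim is about the reported $s_i$ and $b_k$) and fix the reports of every agent other than a tagged seller $i$, and then study how $U_i$ in (\ref{eq:util}) varies as seller $i$ changes its reported reservation price to $s_i'$. The argument is the one used for the double-auction mechanisms of \cite{DR00,DR01}, and the real work is to check that the allocation rule (\ref{eq:rule}) of our market meets the structural hypotheses their incentive-compatibility results require.

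Two facts carry the proof. First, the trading price $\bar p(\boldsymbol{a})=(s_L+b_M)/2$ depends only on the \emph{excluded} marginal seller $L$ and marginal buyer $M$; hence it is insensitive to the report of any seller ranked strictly below $L$. Second, by (\ref{eq:rule}) the cleared quantity $Q_i(\boldsymbol{a})$ depends only on the offered amounts $\{a_j\}$, the demands $\{x_k\}$, and \emph{which} sellers and buyers land in the traded sets $\{j<L\}$ and $\{k<M\}$ --- not on the numerical value of $s_i$. From these, the easy case is immediate: if seller $i$ trades under truthful reporting then $i<L$, so $\bar p\ge s_L>s_i$, and its current payoff $(\bar p-s_i)Q_i-\tau_iQ_i^2$ is nonnegative and obtained at a price it cannot inflate; reporting $s_i'<s_i$ leaves $i$ below rank $L$ and changes nothing, while reporting $s_i'>s_i$ either again changes nothing or pushes $i$ to rank $\ge L$, making it the new marginal or a fully excluded seller and sending $U_i$ to $0$, which is no better. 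So truth-telling is a best response for every trading seller.

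The delicate case --- and the step I expect to be the main obstacle --- is a seller $i$ that is \emph{not} trading under truthful reporting, in particular the marginal seller $L$ itself, which satisfies $s_L\le b_M$ yet earns $U_i=0$. Such a seller can alter the outcome only by \emph{under}-reporting $s_i'<s_i$ until it is re-ranked into the traded set, and one must show that the ensuing re-sorting, the recomputed supply/demand intersection $(L',M')$, the new price $\bar p'=(s_{L'}+b_{M'})/2$, and the re-allocation produced by (\ref{eq:rule}) always leave $(\bar p'-s_i)Q_i-\tau_iQ_i^2\le 0$. The mechanism is engineered so that re-entry forces one of two things: relocating seller $i$'s step to the cheap end of the supply curve shifts that curve upward, so the intersection moves to a larger cleared volume with a weakly cheaper marginal seller and a weakly lower marginal bid, driving $\bar p'$ down to or below seller $i$'s \emph{true} cost $s_i$; and/or the extra volume $a_i$ now counted toward $\sum_{j<L'}a_j$ tips the market into the oversupply branch of (\ref{eq:rule}), where the equal split of the oversupply burden loads enough onto seller $i$ that its cleared quantity $Q_i$ collapses to $0$. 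Carrying out this bookkeeping precisely --- the combinatorics of how the intersection relocates when a single step of the supply curve is moved, together with the iterated equal-share redistribution of the oversupply in (\ref{eq:rule}) --- is exactly the content established in \cite{DR00} and \cite{DR01}, so the cleanest route is to verify that our market meets their hypotheses (the only non-routine one being the second structural fact above about (\ref{eq:rule})) and then invoke their incentive-compatibility conclusion. Ties between equal reservation prices or bids are absorbed by the virtual-seller / virtual-buyer convention stated after (\ref{eq:sel})--(\ref{eq:buy}).

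Finally, the mirror-image argument settles the buyers: a buyer $k$ excluded under truthful reporting (in particular the marginal buyer $M$) could enter only by \emph{over}-bidding $b_k'>b_k$, which shifts the demand curve upward, raising the recomputed price $\bar p'$ to at least $b_k$ and/or tipping the market into the over-demand branch of (\ref{eq:rule}); in either case $b_k-\bar p'\le 0$ per unit bought, so its would-be savings are nonpositive and truth-telling is again a best response. Since no unilateral misreport of a reservation price $s_i$ or a reservation bid $b_k$ is ever strictly profitable, truthful reporting is a (weakly) dominant strategy for every participant, i.e., the double auction of Section~\ref{sec:da} is strategy-proof, as claimed.
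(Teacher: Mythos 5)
Your proposal is correct and, at bottom, takes the same route as the paper: the paper states this lemma without any proof of its own, simply asserting that it follows from the double-auction mechanisms of \cite{DR00} and \cite{DR01}, which is precisely the incentive-compatibility machinery you invoke after checking that the price rule $\bar p=(s_L+b_M)/2$ and the allocation rule (\ref{eq:rule}) meet its hypotheses. Your write-up is in fact considerably more explicit than the paper's (the trading-versus-excluded case split for sellers and the mirror argument for buyers), and the one step you flag as delicate---that an excluded seller who under-reports to re-enter cannot profit---is exactly the part that both you and the paper ultimately delegate to the cited references.
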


\section{Proposed Solution and Algorithm}\label{sec:algo}\vspace{-0.2cm}
Any storage unit $i \in \mathcal{N}$ can use the proposed double auction in order to estimate its utility, as per (\ref{eq:util}), for every $a_i$ given the strategy choices $\boldsymbol{a}_{-i}$ of its opposing players. Each seller seeks to maximize its utility by selecting the proper strategy $a_i \in \mathcal{A}_i$. In order to solve a noncooperative game in normal form such as $\Xi$ , one popular solution is that of a \emph{Nash equilibrium}~\cite{GT00}. A Nash equilibrium is a state of the game such that no player can increase its utility by \emph{unilaterally} deviating from this equilibrium state. Formally, the Nash equilibrium is defined as follows~\cite{GT00}:
\begin{definition}
Consider the proposed noncooperative game in normal form $\Xi=\{\mathcal{N}, \{\mathcal{A}_i\}_{i\in\mathcal{N}}, \{U_i\}_{i\in\mathcal{N}}\}$, with $U_i$ given by (\ref{eq:util}) given the underlying double auction. A vector of strategies $\boldsymbol{a}^*$ is said to be at a \emph{Nash equilibrium~(NE)}, if and only if, it satisfies the following set of inequalities:
\begin{align}
U_i(a_i^*, \boldsymbol{a}_{-i}^*) \geq U_i(a_i, \boldsymbol{a}_{-i}^*) , \ \ \forall a_i\in\mathcal{A}_i,\ i\in\mathcal{N}.
\end{align}
\end{definition}

Next, we first prove the existence of an NE for the proposed game and, then, we propose an algorithm that could find an NE for our model. Before going through our analysis, we first point out that, in general, the existence of an NE is not guaranteed for any noncooperative games. In particular, when the strategy spaces $\mathcal{A}_i,\ \forall i \in \mathcal{N}$ are compact such as in our proposed game $\Xi$, the existence of an NE is contingent upon having a utility function in (\ref{eq:util}) that is continuous in $a_i$~\cite{GT00}. However, in our game, the double auction process introduces a discontinuity in the utilities in (\ref{eq:util}) due to the dependence on the trading price. Showing the existence of NE for a discontinuous utility function is known to be more challenging than the classical case in which the utility is a continuous function~\cite{GT00}. Nonetheless, for the proposed game, we can obtain the following existence result:
\begin{theorem}\label{th:existence}
For the noncooperative game $\Xi=\{\mathcal{N},\{\mathcal{A}_i\}_{i\in\mathcal{N}}, \{U_i\}_{i\in\mathcal{N}}\}$, there exists at least one pure-strategy Nash equilibrium.
\end{theorem}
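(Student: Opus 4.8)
The plan is to exploit the finite combinatorial structure hidden behind the auction. The strategy sets $\mathcal A_i=[0,B_i]$ are nonempty, compact and convex intervals of $\mathbb R$, so the only obstruction to a textbook Kakutani-type existence argument is the discontinuity that the double auction injects into $U_i$; I would therefore isolate and discharge that discontinuity using an existence theorem for games with discontinuous payoffs of Dasgupta--Maskin type (equivalently, Reny's better-reply-security theorem). \emph{Step~1 (localize the discontinuities).} The orderings (\ref{eq:sel})--(\ref{eq:buy}), the bids $b_k$ and the demands $x_k$ are \emph{fixed} data; only the step widths of the supply curve vary with $\boldsymbol a$. Hence which pair $(L,M)$ the auction of Section~\ref{sec:da} returns, and which branch of the allocation rule (\ref{eq:rule}) is active (including the truncations $(\cdot)^+$ and the nested redistribution of an unabsorbable oversupply share), is decided by finitely many linear comparisons between the partial sums $\sum_{j\le\ell}a_j$ and the fixed numbers $\sum_{k\le m}x_k$, together with the profitability inequalities $b_M\ge s_L$. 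Thus $\mathcal A=\prod_i\mathcal A_i$ splits into \emph{finitely many} polyhedral cells; on the interior of each, $\bar p(\boldsymbol a)$ is constant and every $Q_i(\boldsymbol a)$ is affine, so $U_i$ is continuous there, and the discontinuity set of $U_i$ lies in a finite union of hyperplanes, each of the form $\{a_i=g(\boldsymbol a_{-i})\}$ with $g$ affine --- exactly the ``finite union of graphs of continuous functions'' shape that the Dasgupta--Maskin theorem requires.

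\emph{Step~2 (own-strategy quasiconcavity).} Fix $\boldsymbol a_{-i}$ and regard $U_i$ as a function of $a_i\in[0,B_i]$. Within any cell, $U_i=(\bar p-s_i)Q_i-\tau_iQ_i^2$ is a concave quadratic in $Q_i$ precomposed with an affine map, hence concave in $a_i$. Across cells, enlarging $a_i$ only adds supply, so the marginal price $\bar p$ is non-increasing in $a_i$ (downward jumps only), while the amount $Q_i$ actually sold by $i$ is non-decreasing and continuous in $a_i$ (at the over-demand/oversupply threshold the oversupply, hence $\beta_i$, vanishes, so the two branches of (\ref{eq:rule}) agree, and $i$ can never be driven to the marginal or excluded position by increasing its own offer since its index in (\ref{eq:sel}) is fixed). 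From this monotone, piecewise-concave profile one checks that every superlevel set $\{a_i:U_i\ge c\}$ is an interval, i.e.\ $U_i(\cdot,\boldsymbol a_{-i})$ is quasiconcave on $[0,B_i]$.

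\emph{Step~3 (the semicontinuity conditions, and conclusion).} It remains to verify (i) the aggregate $\Gamma(\boldsymbol a):=\sum_{i\in\mathcal N}U_i(\boldsymbol a)$ is upper semicontinuous on $\mathcal A$, and (ii) each $U_i(\cdot,\boldsymbol a_{-i})$ is \emph{weakly} lower semicontinuous in $a_i$ --- at a discontinuity point $a_i^0$, $U_i(a_i^0,\boldsymbol a_{-i})$ is at least a convex combination of $\lim_{a_i\uparrow a_i^0}U_i$ and $\lim_{a_i\downarrow a_i^0}U_i$. Both reduce to the same observation: $\bar p$ and each $Q_i$ are piecewise-polynomial in $\boldsymbol a$ with finitely many pieces, so they admit one-sided limits at every cell boundary, and the tie-breaking/participation convention of Section~\ref{sec:da} assigns to the boundary the value inherited from the ``weakly larger trade'' side, leaving $\Gamma$ with no upward jump and making each $U_i$ weakly lower semicontinuous. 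With compact convex strategy intervals, the quasiconcavity of Step~2, the localization of Step~1 and conditions (i)--(ii) in hand, the Dasgupta--Maskin existence theorem for discontinuous games (or: checking better-reply security and invoking Reny's theorem) yields a pure-strategy Nash equilibrium $\boldsymbol a^*$, which is Theorem~\ref{th:existence}.

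\emph{Where the difficulty lies.} Steps~1 and~2 are essentially bookkeeping and an elementary monotonicity argument; the real work is Step~3, and within it the allocation rule (\ref{eq:rule}). The nested redistribution --- a seller who cannot absorb its equal share of the oversupply offloads the remainder onto the others, possibly triggering further offloads --- makes $Q_i$ a nested piecewise-linear function of $\boldsymbol a$, so one must enumerate the ways the active-seller set and the redistribution pattern can change as $\boldsymbol a$ crosses a face of a cell and confirm, case by case, that the one-sided limits exist and that the convention-value always sits on the side that preserves upper semicontinuity of $\Gamma$ and weak lower semicontinuity of each $U_i$. A secondary subtlety is making the global quasiconcavity claim of Step~2 airtight across the many cells a single $a_i$-section crosses.
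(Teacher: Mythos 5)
Your proposal follows the same master plan as the paper's proof: both reduce Theorem~\ref{th:existence} to the Dasgupta--Maskin existence theorem for games with discontinuous payoffs \cite{dasgupta1986existence}, after observing that the strategy sets $[0,B_i]$ are nonempty, compact and convex and that the auction-induced discontinuity of $U_i$ is the only obstruction. The difference lies in which package of Dasgupta--Maskin hypotheses you check and how quasiconcavity is argued. The paper verifies graph continuity and upper semicontinuity of each $U_i$ in $\boldsymbol{a}$ (treating $\bar p$ as piecewise continuous and resolving each jump in the favorable direction), and obtains own-strategy quasiconcavity by a composition lemma: $Q_i(\boldsymbol{a})$ is quasilinear (monotone, hence both quasi-concave and quasi-convex) in $a_i$, $U_i$ is a concave quadratic in $Q_i$, so the composite is quasi-concave. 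You instead verify the other standard version of the theorem --- discontinuities confined to finitely many graphs of (here affine) functions, upper semicontinuity of $\sum_i U_i$, and weak lower semicontinuity of each $U_i$ in $a_i$ --- and your Step~1 localization of the discontinuity set to polyhedral cell boundaries, driven by finitely many linear comparisons in the allocation rule (\ref{eq:rule}), is more explicit than anything in the paper and is a genuine improvement in bookkeeping. Your quasiconcavity argument (monotone price, monotone quantity, interval superlevel sets) plays the role of the paper's composition lemma.

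Be aware, however, that both arguments share the same soft spot, and your Step~2 does not fully close it: ``piecewise concave with only downward price jumps'' does not by itself force superlevel sets to be intervals. If $\bar p$ jumps down at some $a_i^0$ while $Q_i$ is still below the maximizer $(\bar p - s_i)/(2\tau_i)$ associated with the new, lower price, then $U_i$ drops at $a_i^0$ and subsequently resumes increasing, so a superlevel set at the pre-jump value can be disconnected. The paper has exactly the same exposure --- its composition lemma is applied only ``in a given range of constant price,'' and the behavior across price jumps is handled by appeal to intuition --- so your proposal sits at the same level of rigor as the published argument on the one step that actually matters, and you have correctly identified (in your closing remarks) where the remaining work lies.
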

\begin{proof}
Assume $\mathcal{A}_i \subseteq \mathbb{R}^m (i=1, \dots, \mathcal{N})$, is a non-empty, convex and compact set. As shown in \cite{dasgupta1986existence}, if $\forall i$, $U_i: \mathcal{A} \rightarrow \mathbb{R}^1$ is \emph{1)} graph-continuous \emph{2)} upper semi-continuous in $a$ \emph{3)} quasi-concave in $a_i$, then the game $\Xi=\{\mathcal{N}, \{\mathcal{A}_i\}_{i\in\mathcal{N}}, \{U_i\}_{i\in\mathcal{N}}\}$ possesses a pure-strategy Nash equilibrium.

A function $U_i$ is said to be graph continuous if there exists a function $A_i=F_i(A_{-i}),\ \forall a \in A$ such that $U_i(F_i(a_{-i}),a_{-i})$ is continuous in $\boldsymbol{a}_{-i}$. In particular, as shown in \cite{dasgupta1986existence}, a piecewise continuous function is graph continuous if the strategy space is compact. Due to the discontinuity of our utility function as $\boldsymbol{a}_{-i}$ varies, we define
\begin{equation}
\begin{split}
\bar{p}(a_{i},\boldsymbol{a}_{-i})=p_1,\\
\bar{p}(a_{i},\boldsymbol{a}_{-i}+\Delta a)=p_2.\\
\end{split}
\end{equation}
at a jump point. Thus, we can assume that in a given range $\Delta a$,
\begin{equation}
\bar{p}(a_{i},\boldsymbol{a}_{-i})=\frac{p_2-p_1}{\Delta a}\Delta \boldsymbol{a}_{-i},
\end{equation}
and when $\Delta a$ is a small value, the slope becomes infinite, which implies that the function is a piecewise continuous function on $\boldsymbol{a}_{-i}$.

Mathematically, $U_i(a_{i},\boldsymbol{a}_{-i})$ is upper semi-continuous at $a_{i0}$ if there exists a neighborhood $a_{i}$ such that
\begin{equation}
\lim_{a_i \rightarrow a_{i0}} \sup U_i(a_{i},\boldsymbol{a}_{-i})\le U_i(a_{i0},\boldsymbol{a}_{-i}).
\end{equation}
Similarly, for a jump point of utility function, we define $a_{i0}=a_{i}+\Delta a$ such that $p_3 \le p_4$,
\begin{equation}
\begin{split}
\bar{p}(a_{i},\boldsymbol{a}_{-i})=p_3,\\
\bar{p}(a_{i0},\boldsymbol{a}_{-i})=p_4.\\
\end{split}
\end{equation}
Clearly, the utility function is upper semi-continuous because rational players seek a higher profit around the jump point. Here, we only need to prove that the utility function is quasi-concave.

In a given range of constant price, by simplifying (\ref{eq:util}) and (\ref{eq:rule}), we have:
\begin{equation}\label{eq:UQa}
\begin{split}
f(Q) &=(\bar p-s_i)Q -  Q^2, \\
Q(\boldsymbol{a}) &=\begin{cases}
a_i, & \text{if $\sum\nolimits_{k=1}^{M-1}x_k \geq \sum\nolimits_{j=1}^{L-1}a_j$},\\
(a_i-\beta_i)^+, & \text{if $\sum\nolimits_{k=1}^{M-1}x_k \leq \sum\nolimits_{j=1}^{L-1}a_j$},
\end{cases}\\
U_i &=f(Q(\boldsymbol{a})).
\end{split}
\end{equation}
Before proceeding further with the proof, we need to state the following Lemma from \cite{BO00}:

\begin{lemma} Suppose $g: X \rightarrow R$ is quasilinear and $h: g(X) \rightarrow R$ is a quasi-concave function. Then $h \circ g: X \rightarrow R$ is quasi-concave. 
\end{lemma}

This result has been extended to concave functions with strict conditions in \cite{BO00}. Now,
\begin{equation}\label{eq:Qa}
\begin{split}
\frac{\partial Q(\boldsymbol{a})}{\partial a_i} \ge& 0,\\
Q(\lambda a_{i}^x + (1-\lambda) a_{i}^y, \boldsymbol{a}_{-i}) \geq & \min[Q(a_{i}^x),Q(a_{i}^y), \boldsymbol{a}_{-i}],\\
Q(\lambda a_{i}^x + (1-\lambda) a_{i}^y, \boldsymbol{a}_{-i}) \leq & \max[Q(a_{i}^x),Q(a_{i}^y), \boldsymbol{a}_{-i}],\\
& \forall \ a_{i}^x\neq a_{i}^y, \lambda \in (0,1),\\
\end{split}
\end{equation}
where $a_{i}^x$ and $a_{i}^y$ belong to the action set $\mathcal{A}_i$ of seller $i$. Thus, $Q(\boldsymbol{a})$ is both quasi-concave and quasi-convex, and hence it is a quasi-linear function. Subsequently we can obtain the partial derivative with respect to $Q$ from (\ref{eq:UQa}):
\begin{equation}\label{eq:UQ}
\begin{split}
\frac{\partial U(Q)}{\partial Q} =& \bar p-s_i  -  2Q,\\
U(\lambda Q_x + (1-\lambda) Q_y) \geq & \lambda U(Q_x)+(1-\lambda) U(Q_y),\\
& \forall \ Q_x\neq Q_y, \lambda \in (0,1).\\
\end{split}
\end{equation}
Thus, $U(Q)$ is a concave function (which is quasi-concave).
Following Lemma 2, we substitute (\ref{eq:Qa}) into (\ref{eq:UQ}),
\begin{equation}
\begin{split}
&U[Q(\lambda a_{i}^x + (1-\lambda) a_{i}^y,\boldsymbol{a}_{-i})]\\
\geq& U[\min\{Q(a_{i}^x,\boldsymbol{a}_{-i}),Q(a_{i_y},,\boldsymbol{a}_{-i})\}],\\
\geq& \min\{U[Q(a_{i}^x,\boldsymbol{a}_{-i})],U[Q(a_{i_y},\boldsymbol{a}_{-i})]\}.
\end{split}
\end{equation}
Thus, $U$ is a quasi-concave function of $a_i$.

Intuitively, the partial derivative of $U$ on $Q$ is positive before the local maximum of $U$ and negative after it. The partial derivative of $Q$ on $a_i$ is $1$ or a nonnegative number depending on the auction except at the inflection point when $a_i=\beta_i$. Around this inflection point, $U(a_i,\boldsymbol{a_{-i}})$ firstly increases then might decrease as $a_i$ varies in a price-holding graph-continuous range. Therefore, $U(a)$ is a quasi-concave function in $a$, and the game $\Xi=\{\mathcal{N},\{\mathcal{A}_i\}_{i\in\mathcal{N}}, \{U_i\}_{i\in\mathcal{N}}\}$ possesses a pure-strategy Nash equilibrium as it satisfies all required conditions.
\end{proof}

At any NE of the proposed game, no storage unit can improve its utility by \emph{unilaterally} changing the maximum quantity of energy that it wishes to sell, given the equilibrium strategies of the other storage units. Having established existence, we must develop a scheme that allows to reach an NE of the game $\Xi$. To do so, we must first define the notion of a \emph{best response}:
\begin{definition}
The \emph{best response} $r(\boldsymbol{a}_{-i})$ of any storage unit $i \in \mathcal{N}$  to the vector of strategies $\boldsymbol{a}_{-i}$ is a set of strategies for seller $i$ such that:
\begin{align}\label{eq:br}
\small
r(\boldsymbol{a}_{-i})\!=\!\{a_i \in \mathcal{A}_i|U_i(a_i,\boldsymbol{a}_{-i}) \ge U_i(a_i^\prime,\boldsymbol{a}_{-i}),\ \forall a_i^\prime \in \mathcal{A}_i\}.
\end{align}
\end{definition}
Hence, for any storage unit $i \in \mathcal{N}$, when the other storage units' strategies are chosen as given by $\boldsymbol{a}_{-i}$, any best response strategy in $r(\boldsymbol{a}_{-i})$ is at least as good as any other strategy in $\mathcal{A}_i$. Using the concept of a best response, we can subsequently define a novel algorithm that can be used by the storage units and buyers so as to exchange energy. In particular, we propose the following iterative algorithm that is guaranteed to converge to a Nash equilibrium of the game:

\begin{theorem}\label{th:converge}
There exists a searching inertia weight $w$, $0<w<1$, such that, the iterative algorithm
\begin{equation}\label{eq:al}
a_i^{(n+1)}=(1-w) r(\boldsymbol{a}_{-i}^{(n)}) + w a_i^{(n)},
\end{equation}
converges to an NE.
\end{theorem}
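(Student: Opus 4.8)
The plan is to read the recursion (\ref{eq:al}) as a damped, Krasnoselskii--Mann-type fixed-point iteration for the joint best-response map of $\Xi$, and to derive convergence from the general fact that an \emph{averaged} nonexpansive self-map of a compact convex set whose fixed-point set is nonempty has convergent Picard iterates, the limit being a fixed point. Since a profile $\boldsymbol a^\ast$ is an NE precisely when it is fixed by the joint best-response map, and Theorem~\ref{th:existence} already furnishes such a point, this will yield the claim. Concretely, I would first assemble the per-player best responses of Definition~(\ref{eq:br}) into the joint map $R:\mathcal A\to\mathcal A$, $R(\boldsymbol a)=(r(\boldsymbol a_{-1}),\dots,r(\boldsymbol a_{-N}))$, on the nonempty, convex, compact set $\mathcal A=\prod_{i}[0,B_i]$. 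On every region of $\mathcal A$ over which the auction outcome $(L,M,\bar p(\boldsymbol a))$ is constant, $U_i$ in (\ref{eq:util}) is the composition of the strictly concave quadratic $Q\mapsto(\bar p-s_i)Q-\tau_iQ^2$ with the monotone piecewise-affine map $Q_i$ of (\ref{eq:rule}); hence the maximizer is unique and $R$ is single-valued, so (\ref{eq:al}) reads $\boldsymbol a^{(n+1)}=(1-w)R(\boldsymbol a^{(n)})+w\,\boldsymbol a^{(n)}$, an averaged iteration with relaxation $w\in(0,1)$.

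The core step is to show that $R$ is nonexpansive, and I would do this region by region. In an ``over-demand'' region ($\sum_{k<M}x_k\ge\sum_{j<L}a_j$) one has $Q_i=a_i$, the optimum is $a_i=(\bar p-s_i)/(2\tau_i)$, which does \emph{not} depend on $\boldsymbol a_{-i}$, so $\partial r/\partial a_j=0$. In an ``over-supply'' region, writing $Q_i=a_i-\beta_i$ with $\beta_i=\big(\sum_{j<L}a_j-\sum_{k<M}x_k\big)/(L-1)$ as in (\ref{eq:rule}), the first-order condition fixes $Q_i=(\bar p-s_i)/(2\tau_i)$ and, inverting, makes $r(\boldsymbol a_{-i})$ affine in $\boldsymbol a_{-i}$ with $\partial r/\partial a_j=1/(L-2)$ for each of the $L-2$ participating opponents and $0$ otherwise; the resulting Jacobian of $R$ is symmetric with zero diagonal and eigenvalues in $[-\tfrac{1}{L-2},1]$, hence has operator norm at most $1$. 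The final truncation onto $[0,B_i]$ is a metric projection, which is $1$-Lipschitz and cannot enlarge this constant, and the degenerate cases $L\le 2$ (at most one participating seller, so $r$ is constant in $\boldsymbol a_{-i}$) are immediate.

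Patched together, $R$ is $1$-Lipschitz, and it then suffices to invoke the classical Krasnoselskii--Mann/Opial theorem: for a nonexpansive self-map $R$ of a closed convex subset of a Euclidean space with $\mathrm{Fix}(R)\neq\varnothing$ and any constant $w\in(0,1)$, the averaged iterates $\boldsymbol a^{(n+1)}=(1-w)R(\boldsymbol a^{(n)})+w\,\boldsymbol a^{(n)}$ are asymptotically regular ($\|\boldsymbol a^{(n+1)}-\boldsymbol a^{(n)}\|\to0$) and Fej\'er-monotone with respect to $\mathrm{Fix}(R)$, hence converge to a point of $\mathrm{Fix}(R)$, i.e.\ to an NE of $\Xi$ --- in fact for \emph{every} $w\in(0,1)$, which is stronger than the existential statement claimed.

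The step I expect to be the main obstacle is precisely the one that also complicated Theorem~\ref{th:existence}: the map $R$ inherits the discontinuity of the trading price, so it is only \emph{piecewise} nonexpansive and may jump where the marginal pair $(L,M)$ changes, which a priori breaks the continuity that Krasnoselskii--Mann requires. To close this I would either (i) localize the argument to the generic regime in which an equilibrium lies in the interior of a single price-holding cell, so that $R$ is genuinely piecewise-affine and nonexpansive on a neighborhood and convergence is obtained locally, or (ii) show the jumps are one-sided and ``benign'' --- exactly the monotonicity observation (rational players move toward higher profit around a jump) used to establish upper semicontinuity in the proof of Theorem~\ref{th:existence} --- so that $\|\boldsymbol a^{(n)}-z\|$ stays nonincreasing across a jump for every NE $z$. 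A secondary point to verify is that the row-sum/operator-norm bound on the Jacobian survives when some opponent $j$ sits at an endpoint of $[0,B_j]$; this only deletes a dependence, so it can only help, but it should be stated explicitly.
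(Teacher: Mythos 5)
Your per-cell description of the best response agrees with the paper's (\ref{eq:twocases}) --- constant in the over-demand regime, affine with coefficients $1/(L-2)$ in the over-supply regime --- and the observation that each affine piece is nonexpansive (symmetric Jacobian $\tfrac{1}{L-2}(\mathbf{1}\mathbf{1}^{T}-I)$ with eigenvalues $1$ and $-\tfrac{1}{L-2}$) is correct and is not made explicit in the paper. But the route through Krasnoselskii--Mann has two genuine gaps, and they sit exactly where the paper's (admittedly informal) proof does its work. First, the assertion ``patched together, $R$ is $1$-Lipschitz'' is false: $R$ jumps wherever the marginal pair $(L,M)$, and hence $\bar p(\boldsymbol{a})$, changes, so the global map is not nonexpansive (not even continuous), and the Krasnoselskii--Mann/Opial theorem does not apply to it. You concede this in your final paragraph, but neither remedy you sketch is carried out, and remedy (ii) --- Fej\'er monotonicity across the jumps --- is essentially the entire content of the theorem, not a loose end.

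Second, and more fundamentally, nonexpansiveness on a single over-supply cell does not give convergence there, because the fixed-point set of the restricted map is generically empty. The operator norm of the cell Jacobian is exactly $1$, attained on the all-ones direction, and projecting the affine iteration onto that direction reproduces the paper's (\ref{eq:alrange1}): $\sum_i a_i^{(n+1)}=\sum_i a_i^{(n)}+(1-w)\sum_i G_i$. Unless $\sum_i G_i=0$, the aggregate drifts monotonically until the iterates leave the cell and the price regime switches --- precisely the ``price-changing loop'' the paper identifies as the reason plain best-response dynamics fails, and the reason the inertia weight must be tuned: the paper uses $w$ to keep $\sum_j a_j$ inside the bracket (\ref{eq:alrange2}) so that $\bar p$ stabilizes before the concavity and Lipschitz-gradient estimates take over. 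Your conclusion that convergence holds for \emph{every} $w\in(0,1)$ is therefore a red flag rather than a strengthening: Krasnoselskii--Mann with an empty local fixed-point set yields nothing, and the global argument it would need is the one piece you have not supplied. To close the proof along your lines you would need to show either that the drift terminates at a cell boundary containing an NE, or that $\|\boldsymbol{a}^{(n)}-\boldsymbol{a}^{*}\|$ is nonincreasing across regime changes for some NE $\boldsymbol{a}^{*}$; absent that, the proposal does not go through.
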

\begin{proof}
In the proposed model, the classical best response dynamics may not converge due to the underlying auction mechanism. Depending on the different amounts between sellers and buyers, the price is a piecewise continuous function. From (\ref{eq:util}) and (\ref{eq:br}), we have:
\begin{equation}\label{eq:twocases}
\begin{split}
&r(\boldsymbol{a}_{-i})=\argmax_{a_i} [(\bar{p}(\boldsymbol{a})-s_i)Q_{i}(\boldsymbol{a})  -  \tau_i Q_{i}^2(\boldsymbol{a})],\\
=&\begin{cases}
\frac{\bar{p}(\boldsymbol{a})-s_i}{2\tau_i},\\
\qquad \text{if $\sum_{j=1}^{L-1} a_j^{(n)} \le \sum_{k=1}^{M-1} x_k$, (sell$\le$buy)};\\
\frac{(\bar{p}(\boldsymbol{a})-s_i)(L-1)+2\tau_i(\sum_{j=1,j\neq i}^{L-1}a_j-\sum_{k=1}^{M-1}x_k)}{2\tau_i(L-2)},\\
\qquad \text{if $\sum_{j=1}^{L-1} a_j^{(n)} \ge \sum_{k=1}^{M-1} x_k$, (sell$\ge$buy)}.\\
\end{cases}
\end{split}
\end{equation}
When selling amounts are less than buying, the best response of each seller is a constant. When selling amounts are greater than buying, and, we only use best response for iterations, we have
\begin{equation}\small
a_i^{(n+1)}=\frac{1}{L-2}\sum_{j=1,j\neq i}^{L-1}a_j^{(n)}+G_i,
\end{equation}
where $G_i=\frac{(\bar{p}(\boldsymbol{a})-s_i)(L-1)}{2\tau_i(L-2)}-\frac{1}{L-2}\sum_{k=1}^{M-1} x_k.$
To sum all $L-1$ sellers,
\begin{equation}\label{eq:brrange1}\small
\sum_{1}^{L-1}a_i^{(n+1)}=\sum_{1}^{L-1}a_i^{(n)}+\sum_{1}^{L-1}G_i.
\end{equation}
The second term on right hand side is not $0$ and this might lead a price change. In other words, in iteration $\gamma$,
\begin{equation}\label{eq:brrange2}\small
\begin{cases}
\textrm{if } \sum_{k=1}^{M-2} x_k \le \sum_{j=1}^{L-1} a_j^{(\gamma)} \le \sum_{k=1}^{M-1} x_k,\!\!\!\!&\bar{p}(\boldsymbol{a})=p_1\text{(sell$\le$buy)},\\
\textrm{if }  \sum_{k=1}^{M-1} x_k \le \sum_{j=1}^{L-1} a_j^{(\gamma)} \le \sum_{k=1}^{M} x_k,\!\!\!\!&\bar{p}(\boldsymbol{a})=p_2\text{(sell$\ge$buy)}.
\end{cases}
 \end{equation}
It is possible that, in iteration $\gamma+1$, the best response changes the total amount $\sum a_i^{(\gamma+1)}$ in (\ref{eq:brrange1}) and this can lead to a price changing loop.

However, from (\ref{eq:util}) and (\ref{eq:br}), our proposed algorithm in (\ref{eq:br}) has
\begin{equation}\small
a_i^{(n+1)}=wa_i^{(n)}+(1-w)\frac{1}{L-2}\sum_{j=1,j\neq i}^{L-1}a_j^{(n)}+(1-w)G_i.
\end{equation}
Similarly, to sum over all $L-1$ sellers,
\begin{equation}\label{eq:alrange1}\small
\sum_{1}^{L-1}a_i^{(n+1)}=\sum_{1}^{L-1}a_i^{(n)}+(1-w)\sum_{1}^{L-1}G_i.\\
\end{equation}
In particular, when $n>\gamma$, we have (\ref{eq:alrange1}). Thus, there must exist a weight $w$, such that
\begin{equation}\label{eq:alrange2}
\small
\begin{cases}
\textrm{if } \sum_{k=1}^{M-2} x_k \le \sum_{j=1}^{L-1} a_j^{(\gamma+1)} \le \sum_{k=1}^{M-1} x_k,\!\!\!\!&\bar{p}(\boldsymbol{a})=p_1\text{(sell$\le$buy)},\\
\textrm{if }  \sum_{k=1}^{M-1} x_k \le \sum_{j=1}^{L-1} a_j^{(\gamma+1)} \le \sum_{k=1}^{M} x_k,\!\!\!\!&\bar{p}(\boldsymbol{a})=p_2\text{(sell$\ge$buy)}.
\end{cases}
\end{equation}
\remark We note that the above result is generated for the case in which the $(L-1)$ sellers are able to sustain the oversupply. Similar results can easily be generated for the cases in which the oversupply cannot be split among all sellers; however, this case is omitted due to space limitations.
\remark With a given price range, the utility in (\ref{eq:util}) can be viewed as a concave function. More precisely, there exists a weight, such that no player could arbitrarily approach its current best response, which might change the price and pull some participating players out of the auction.

In a given price range,
\begin{equation}\small
\begin{split}
&\text{if  } \frac{\partial U_i}{\partial a_i}>0, \quad \frac{U(a_i^{(n+1)},\boldsymbol{a}_{-i}^{(n)})-U(a_i^{(n)},\boldsymbol{a}_{-i}^{(n)})}{a_i^{(n+1)}-a_i^{(n)}} >0,\\
&\text{if  } \frac{\partial U_i}{\partial a_i}<0, \quad \frac{U(a_i^{(n+1)},\boldsymbol{a}_{-i}^{(n)})-U(a_i^{(n)},\boldsymbol{a}_{-i}^{(n)})}{a_i^{(n+1)}-a_i^{(n)}} <0.\\
\end{split}
\end{equation}
An increasing/decreasing monotonic function would approach to its upper/lower boundaries. It is not difficult to obtain the upper boundary from concavity:
\begin{equation}\small
U(a_i^{(n+1)},\boldsymbol{a}_{-i}^{(n)}) \le U(a_i^{(n)},\boldsymbol{a}_{-i}^{(n)})+ \frac{\partial U_i}{\partial a_i}(a_i^{(n+1)}-a_i^{(n)}).
\end{equation}
For the lower boundary,
\begin{equation}\small
\begin{split}
&U(a_i^{(n+1)},\boldsymbol{a}_{-i}^{(n)}) \\
=& U(a_i^{(n)},\boldsymbol{a}_{-i}^{(n)})+ \frac{\partial U_i}{\partial a_i}(a_i^{(n+1)}-a_i^{(n)})\\
&-\int_0^{a_i^{(n+1)}-a_i^{(n)}} \frac{\partial U_i(a_i^{(n)},\boldsymbol{a}_{-i}^{(n)})}{\partial a_i^{(n)}}t-\frac{\partial U_i(a_i^{(n)}+t,\boldsymbol{a}_{-i}^{(n)})}{\partial (a_i^{(n)}+t)} t \,\mathrm{d}t.
\end{split}
\end{equation}
In particular, because $\frac{\partial U_i}{\partial a_i}$ is Lipschitz continuous when the weight holds the price in a range,
\begin{equation}\small
||\frac{\partial U_i(a_i^{(n)},\boldsymbol{a}_{-i}^{(n)})}{\partial a_i^{(n)}}-\frac{\partial U_i(a_i^{(n)}+t,\boldsymbol{a}_{-i}^{(n)})}{\partial (a_i^{(n)}+t)}|| \le  L ||a_i^{(n)}-(a_i^{(n)}+t)||.
\end{equation}
Thus, we have
\begin{equation}\small
\begin{split}
U(a_i^{(n+1)},\boldsymbol{a}_{-i}^{(n)}) \ge &U(a_i^{(n)},\boldsymbol{a}_{-i}^{(n)})+ \frac{\partial U_i}{\partial a_i}(a_i^{(n+1)}-a_i^{(n)})\\
&-\frac{1}{2}L(a_i^{(n+1)}-a_i^{(n)})^2.
\end{split}
\end{equation}
\remark The upper and lower boundary of $U(a_i^{(n+1)},\boldsymbol{a}_{-i}^{(n)})$ are also bounded by (\ref{eq:brrange2}).

Due to the above-mentioned price and boundary analysis, we can conclude that $|a_i^{(n+1)}-a_i^{(n)}|<\varepsilon$ after some iterations $\gamma$, where $\varepsilon$ is a small value. Substituting in (\ref{eq:al}), we obtain (at the final iteration):
\begin{equation}\small
\begin{split}
(1-w)a_i^{(T_f+1)}=&(1-w) r(\boldsymbol{a}_{-i}^{(T_f)})\\
a_i^{(T_f+1)}=&r(\boldsymbol{a}_{-i}^{(T_f)}),
\end{split}
\end{equation}
which is the \emph{best response} of $a_i^{(n)}$. Consequently, our algorithm converges to an NE.
\end{proof}

\begin{table}[!t]\vspace{-0.2cm}
  \centering
  \caption{
    \vspace*{-0.4em}Proposed Energy Trading Solution}\vspace*{-1em}
    \begin{tabular}{p{8cm}}
      \hline
\textbf{Phase 1 - Proposed Dynamics:}   \vspace*{.1em}\\
\hspace*{1em}Each storage unit $i \in \mathcal{N}$ chooses a starting strategy $a_i^{\textrm{init}}=B_i$\vspace*{.1em}\\
\hspace*{2em}\textbf{repeat,}\vspace*{.2em}\\
\hspace*{1em}a) Each seller $i \in \mathcal{N}$ observes its best response strategy $r_i(\boldsymbol{a}_{-i})$\vspace*{.1em}\\
\hspace*{1em}b) Each seller $i \in \mathcal{N}$ randomly selects the better response strategy\\
\hspace*{1em}between the current strategy and best response strategy in (\ref{eq:al}):\\
\hspace*{1em} $wa_i+(1-w) r_i(\boldsymbol{a}_{-i})$ , where $0 \le w \le 1$. As using the method:\vspace*{.1em}\\
\hspace*{3em}a) An auctioneer (utility operator) communicates with\vspace*{.1em}\\
\hspace*{3em}the buyers and sellers using the grid's two-way communication\vspace*{.1em}\\
\hspace*{3em}architecture (see \cite{EW03} or \cite{hossain2012smart} and references therein).\vspace*{.1em}\\
\hspace*{3em}b) The price and amounts of energy to be traded are\vspace*{.1em}\\
\hspace*{3em}found via the double auction of Section~\ref{sec:da}.\vspace*{.1em}\\
\hspace*{4em}\textbf{Auction}\vspace*{.1em}\\
\hspace*{4em}a) The auctioneer advertises $s_i,\ \forall i \in \mathcal{N}$ and $b_k\ \forall k \in \mathcal{K}$.\vspace*{.1em}\\
\hspace*{4em}b) Each seller publishes its expected price, and the\\
\hspace*{4em}auctioneer orders the sellers as required.\vspace*{.1em}\\
\hspace*{4em}c) After ordering, the auctioneer tells seller $i$, during\\
\hspace*{4em}its turn, of the current vector of strategies $\boldsymbol{a}_{-i}$.\vspace*{.1em}\\
\hspace*{4em}d) Seller $i$ computes and submits its strategic response in\vspace*{.1em}\\
\hspace*{4em}(\ref{eq:al}) using (\ref{eq:br}). \vspace*{.1em}\\
\hspace*{2em}\textbf{until} convergence to an NE strategy vector $\boldsymbol{a}^*$.\vspace*{.2em}\\
\textbf{Phase 2 - Market and Trading}   \vspace*{.1em}\\
\hspace*{1em}a) The auctioneer performs the double auction mechanism\vspace*{.1em}\\
\hspace*{1em}given the equilibrium choices as per $\boldsymbol{a}^*$.\vspace*{.1em}\\
\hspace*{1em}b) Actual energy exchange occurs and revenues are collected.\vspace*{.1em}\\
   \hline
    \end{tabular}\label{tab:algo}\vspace{-0.8cm}
\end{table}

Determining the precise computational complexity for the proposed approach is challenging due to the fact that the trading price varies during the iterative process, which subsequently leads to a varying number of participating sellers. However, we can obtain some insights on the computational complexity by assuming a constant trading price $\bar p(\boldsymbol{a})$ in (19). The computational complexity for comparing the amount sold by sellers with the energy requested by the buyers in (19) is $O(L+M)$, where seller $L$ and buyer $M$ determine the trading price (see Fig.~\ref{fig:intersect}). In this respect, if assume that the trading price does not change, the amount of computation required to calculate $\frac{\bar{p}-s_i}{2\tau_i}$ in (19) is $O(1)$ since this value is independent of the market size. The computational complexity needed for calculating $\frac{(\bar{p}-s_i)(L-1)+2\tau_i(\sum_{j=1,j\neq i}^{L-1}a_j-\sum_{k=1}^{M-1}x_k)}{2\tau_i(L-2)}$ in (19) is $O(L+M)$. This represents the individual seller's computational complexity for the proposed sequential algorithm. As we have $(L-1)$ participating sellers, if the trading price is a constant, the total computational complexity of the proposed sequential algorithm is $O\biggl((L-1)(L+M)\biggr)$. For the parallel algorithm, the proposed approach would require a lower computational complexity, $O(L+M)$, but then it will lead to more iterations than in the sequential case as seen from Fig.~\ref{fig:seqparconv}.

The sellers and buyers in the proposed noncooperative game can interact using a novel algorithm composed of two phases: a strategic dynamics phase and an actual market and energy trading stage. Our strategic dynamics stage begins with every seller choosing an initial strategy $a_i^{\textrm{init}}$. While this initial strategy can be chosen arbitrarily by each seller, the most intuitive choice is that each seller starts by trying to sell all of its available surplus of stored energy. Therefore, we let $a_i^{\textrm{init}}=B_i,\ \forall i \in \mathcal{N}$. Subsequently, an iterative process begins in which the sellers can take turns in choosing their maximum amount of energy to sell (i.e., strategies). To this end, at any iteration $\theta$, any storage unit $i$, during its turn to act, will choose a strategy that approaches its best response strategy, as given by (\ref{eq:br}) and (\ref{eq:al}). This iterative algorithm is executed until guaranteed convergence to an NE. In particular, the proposed algorithm has been shown to always converge to a Nash equilibrium in Theorem \ref{th:converge}. A summary of the proposed algorithm is given in Table~\ref{tab:algo}. Here, we note that, although in Table~\ref{tab:algo} we present a sequential implementation of the algorithm, the players may also utilize a parallel approach. In a sequential implementation, the players act sequentially, in a arbitrary order such that each player is able to observe (or is notified by the auctioneer) the actions taken by the previous players. In contrast, in a parallel approach, at an iteration $t$, all players respond, using (\ref{eq:al}), to the actions observed by the other players at iteration $t-1$. Once an NE of the game is reached, the last phase of the algorithm is the practical market operation. During this final phase, given the equilibrium strategies, all sellers and buyers submit their bids and then engage to an actual double auction in which each storage unit discharges (sells) the desired energy amount and is rewarded accordingly.

One possible approach to determine $w$ is to gradually lower its value from $1$ until a suitable value guarantees reaching an NE. This weight essentially ensures that the price, which introduces the discontinuity, is bounded within a certain range after a certain number of iterations. Determining this weight depends on the various buyers/sellers parameters. The control center of the utility company needs to dynamically determine a weight and change it if needed using time-dependent information observed from the participating users. At the beginning, the control center could start with an initial weight (either arbitrarily or chosen based on historical data), and set a periodic time to dynamically adjust the weight. The control center can gradually and periodically optimize the current weight to meet the observed network environment. One possible strategy is to follow a classical bisection method ~\cite{BO00}. The application of this method helps the control center adjust the speed of convergence. At each step, the center divides the weight interval into two. A subinterval is selected depending on whether an NE could be effectively obtained or not. The center would use a previous subinterval as a new interval in the next step and this process is continued until the interval is sufficiently small and convergence is guaranteed. Nonetheless, we have to stress that we have proven that there exists a weight $w,\  0<w<1$ such that the proposed algorithm is guaranteed to converge to an NE, and thus, in practice, control center will eventually reach this weight, as the range within which the weight varies is finite.

For practical implementation, the utility company's control center acts as an \emph{auctioneer}~\cite{DR00}  that guides the energy market and monitors the interactions of the storage units. This auctioneer utilizes a storage-to-grid communication network such as in \cite{EW03} or \cite{hossain2012smart} (and references therein) to communicate with all grid elements and storage units. Thus, the auctioneer plays mainly two roles: (a) sorting out the sellers
and the buyers once bids are received, as per (\ref{eq:sel}) and (\ref{eq:buy}) and (b) gathering their strategies whenever they must act during the dynamics phase.  At a given iteration $t$ during Phase I of the proposed approach, any seller must compute its utility in (\ref{eq:util1}) so as to update its strategy. The strategy update can be based either on the current state (for the sequential algorithm) or on the state of the grid in the previous iteration (for the parallel implementation). For enabling this strategy update, the auctioneer and the storage units interact over the communication infrastructure using either an open or a private method. In the open method, when interacting with a certain seller, the auctioneer conveys the current opponents strategy vector $\boldsymbol{a}_{-i}$, the reservation bids, and the type of auction being
implemented. Once this information is obtained, the seller can calculate its best strategy using classical optimization techniques [31] and pursue the proposed auction procedure.

The disadvantage of the open method is that the auctioneer must disclose the current strategies and reservation prices/bids to the sellers. In many cases, it is of interest to keep this information private. Hence, alternatively, when an auctioneer
communicates with a certain seller $i$, this seller  will submit a restricted set of potential strategies $\bar{\mathcal{A}}_i\subset\mathcal{A}_i$ for the current iteration. Then, the control center feeds back the trading prices and amount of energy that seller $i$ will potentially obtain
at the current time, for each of the submitted strategies. Using this data, each seller $i$ builds, using function
smoothing methods such as in \cite{SMOOTH}, an estimate of its utility under the current strategies, and, subsequently, find the optimal strategy response for this iteration using optimization methods. For this private implementation, the sellers do not require any knowledge on the type of auction being used nor on the strategies and reservation prices of their opponents (or the bids of the buyers). In addition, the control center does not require any knowledge of the utility functions that are being used by the players. The only information that would circulate, as dictated by the method, would be the trading price and the energy sold for any potential strategy $a_i$ submitted by any seller $i$ to the auctioneer during its turn.

We finally note that, in the presence of an elaborate communication infrastructure, the sellers and the buyers can interact directly without the need for a control center. In this case, each seller can, individually, decide on the amount of energy it wants to sell at each iteration of the proposed approach, while directly notifying the other players of its choice. The rest of the operation would still follow the iterative process discussed earlier.\vspace{-0.2cm} 

\section{Simulation Results and Analysis}\label{sec:sim}\vspace{-0.2cm}
For simulating the proposed system, we consider a geographical region in which a number of storage units have a surplus of stored energy that they wish to sell to existing customers (e.g. loads, substations, etc.) in a smart grid. Each unit has a surplus between $75$~MWh and $220$MWh that can be sold. The reservation prices of the sellers are chosen randomly from a range of $[10,50]$ dollars per MWh while reservation bids of the buyers are chosen randomly from a range of $[15,60]$ dollars per MWh. The demand of each buyer is chosen randomly from within a range of $[20,60]$~MWh. Unless stated otherwise, the cost per energy sold is set to $\tau_i = 0.5,\ \forall i\in \mathcal{N}$. All statistical results are averaged over all possible random values for the different parameters (prices, bids, demand, etc.) using a large number of independent simulation runs.

\begin{figure}[!t]
  \begin{center}
   \vspace{-0.2cm}
    \includegraphics[width=9cm]{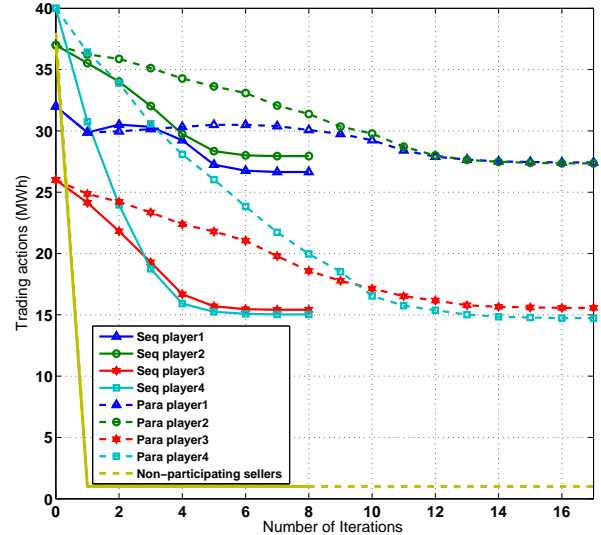}
    \vspace{-0.5cm}
    \caption{\label{fig:seqparconv}Average action per seller (storage unit) resulting from the proposed game approach and from the number of storage units $K=5$~buyers, $N=6$~sellers.}
 \end{center}\vspace{-0.3cm}
\end{figure}

\begin{figure}[!t]
  \begin{center}
   \vspace{-0.2cm}
    \includegraphics[width=7cm]{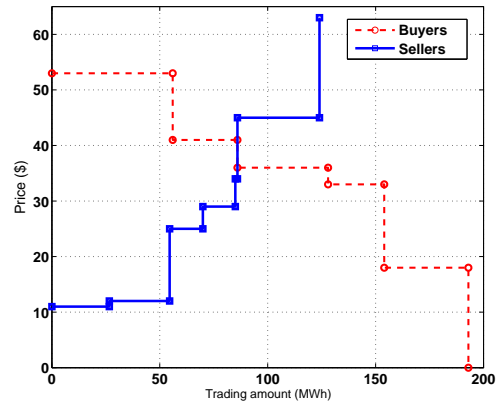}
    \vspace{-0.42cm}
    \caption{\label{fig:seqPAend} The ending double auction market performance of storage units $K=5$~buyers, $N=6$ based on sequential iterations.}
 \end{center}\vspace{-1.1cm}
\end{figure}

In Fig.~\ref{fig:seqparconv}, we show, for a smart grid with $K=5$~buyers and $N=6$~sellers, the average action per seller (storage unit) resulting from the proposed game approach at the equilibrium using both sequential and parallel approaches. The sequential algorithm's performance is compared with that of the parallel algorithm in which the sellers, simultaneously, attempt to sell energy depending on their previous actions. Here, in particular, we choose the same weight $w=0.3$ for both the sequential and the parallel algorithms. In Fig.~\ref{fig:seqparconv}, we can see that, for the proposed algorithm, the trading action per storage unit converges to different values with increasing iterative steps. Fig.~\ref{fig:seqparconv} shows that two players out of six decide not to participate in the market. The brown line relates to those sellers who do not participate in the market due to the fact that would the trading price would then lead to a negative utility. However, these sellers would still have some energy in hand and they can offer it for sale at a later time instant in which the trading price in another auction or area might give them an opportunity to obtain positive utility. Thus, although they do not trade at the current market price, they will maintain their available energy and eventually participate in a future market. In particular, we use the ``brown line'' to indicate a baseline action value of $1$ to represent those sellers that do not participate in the market, but rather prefer to wait for future trading opportunities. We can also observe that, for the sequential algorithm (solid line), the action of player 1 increases a little at the beginning. This is due to the fact that, the player who plays first in one iteration has a higher opportunity to sell energy than others. In general, as seen in Fig.~\ref{fig:seqparconv}, because of the competition over the resources, the actions are essentially decreasing, which means that at the equilibrium, not all players will sell their maximum available energy.

Given the same setting as in Fig.~\ref{fig:seqparconv}, Fig.~\ref{fig:seqPAend} shows the price resulting from the double auction phase for the case of the sequential algorithm. In this figure, the intersection point demonstrates that seller $5$ and buyer $2$ determine the trading price. The total amount sold by participating sellers (seller 1 to 4) is equal to the demand of participating buyers (buyer 1 and 2). If the solid and dashed lines intersect at a point of the $3$rd buyer (the $3$rd range of the dashed line), seller $5$ and buyer $3$ lower the trading price. Although all four participating sellers might sell more than before, the associated reduction in the trading price will lead to lower revenues.
\begin{figure}[!t]
  \begin{center}
   \vspace{-0.25cm}
    \includegraphics[width=7cm]{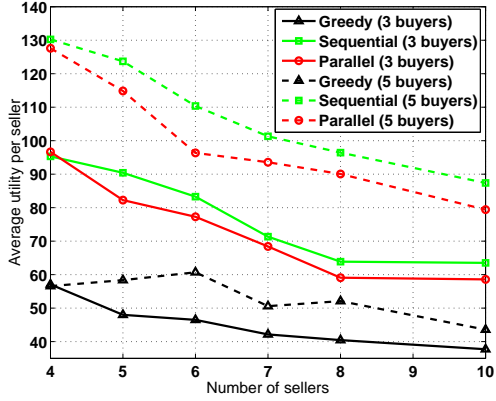}
    \vspace{-0.42cm}
    \caption{\label{fig:utility} Performance assessment in terms of average utility per seller as the number of storage units $N$ varies for $K=3, K=5$~buyers.}
  \end{center}\vspace{-0.95cm}
\end{figure}

Fig.~\ref{fig:utility} presents, for a smart grid with $K=3,5$~buyers, the average achieved utility per seller resulting from the proposed game as the number of storage units $N$ varies. Here, we set $w=0.5$ for the sequential algorithm and $w=0.1$ for the parallel algorithm. For comparison purposes, we develop a conventional, baseline greedy algorithm using which, iteratively, each seller tries to sell the maximum amount that it could sell (while accounting for the changes of the utility in (\ref{eq:util})) while first picking the highest-bid buyers. The greedy process continues until no additional energy trade is possible. In this greedy scheme, the trading prices are selected as the middle point between the concerned buyer's reservation bid and the concerned seller's reservation price. In Fig.~\ref{fig:utility}, we can see that the average utility per storage unit is decreasing with $N$. The reason behind it mainly involves two aspects. First, the increase in sellers can lead to an increased competition and, thus, a decrease in the overall trading price. Second, the number of sellers $L-1 < N$ that will actually participate in the final energy exchange market reaches a certain maximum that no longer increases with $N$ due to the fixed demand (i.e., the number of buyers). This figure demonstrates that, at all $N$, the proposed noncooperative game approach yields a significant performance improvement, in terms of the average utility achieved per storage unit. In particular, this advantage of the proposed approach reaches up to $130.2\%$ (the maximum at $K=5, N=4$) relative to a conventional greedy approach.

Fig.~\ref{fig:iteration} shows, for $K=5$~buyers, the average number of iterations needed before convergence of the different algorithms as the number of storage units $N$ increases. In this figure, we can see that the average number of iterations of the proposed sequential algorithm is similar to that of the classical best response algorithm (whenever this algorithm converges, recall from Theorem~\ref{th:converge} that a best response dynamics may not converge). As expected, Fig.~\ref{fig:iteration} shows that the parallel algorithm requires a much higher number of iterations. In particular, the average number of iterations resulting from the sequential algorithm varies from $7.7$ at $N=6$ to $8.2$ at $N=7$, in contrast, for the parallel case, it varies from $25.8$ at $N=4$ to $32.5$ at $N=10$. This result indicates that the proposed algorithm, particularly with a sequential implementation, has a reasonably fast convergence speed.

\begin{figure}[!t]
  \begin{center}
   \vspace{-0.25cm}
    \includegraphics[width=7cm]{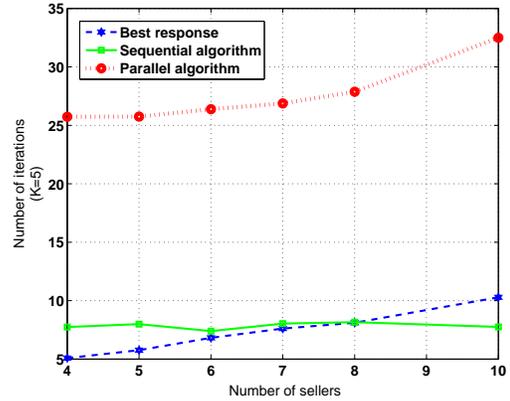}
    \vspace{-0.4cm}
    \caption{\label{fig:iteration} Average number of iterations per seller as the number of storage units $N$ varies for $K=5$~buyers.}
 \end{center}\vspace{-0.26cm}
\end{figure}

\begin{figure}[!t]
  \begin{center}
   \vspace{-0.35cm}
    \includegraphics[width=7cm]{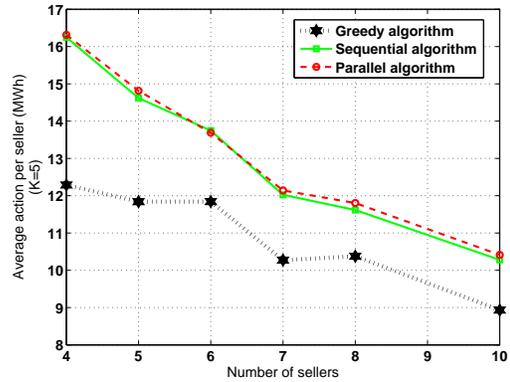}
    \vspace{-0.4cm}
    \caption{\label{fig:action} Average number of action per seller resulting from the proposed game approach as the number of storage units $N$ varies for $K=5$~buyers.}
  \end{center}\vspace{-0.94cm}
\end{figure}
Fig.~\ref{fig:action} shows for $K=5$~buyers, the average action per seller for both the sequential and the parallel algorithms as the number of storage units $N$ grows. We can see that the average action per player resulting from both the sequential and the parallel algorithms is greater than that of the greedy strategy. Thus, using the proposed algorithm provides the sellers with more incentives to trade larger amounts in the markets. This is in fact further reflected in the enhanced utility achieved by the proposed approach, as seen in Fig.~\ref{fig:utility}. Finally, Fig.~\ref{fig:action} also shows that the sequential and the parallel algorithms converge to nearly the same actions at the equilibrium.

Fig.~\ref{fig:tau} shows, for different buyers and sellers, the average utility per seller as the penalty factor $\tau$ varies. From (\ref{eq:util}), we can see that the utility would decrease with increasing $\tau$ and this is corroborated in Fig.~\ref{fig:tau}. In particular, when $\tau$ is equal to $1$, the utility of each player is dramatically influenced by the penalty part in (\ref{eq:util}). The first revenue term in (\ref{eq:util}) the sellers obtained in the auction remains the same as the second penalty term increases, even though the total utility is positive.

Fig.~\ref{fig:difbuyer} shows the average utility from the different proposed approaches as the number of buyers $K$ varies, for $N=6$ sellers. Each iteration consists of a series of choices by the sellers, using the same initial information. In Fig.~\ref{fig:difbuyer}, we can also see that, as the number of buyers, $K$, increases the average utility per seller increases due to the availability of additional buyers that are willing to participate in the market. In fact, Fig.~\ref{fig:difbuyer} shows that, as $K$ increases, the sellers have a larger utility due to the availability of more buyers. In particular, our proposed algorithm yields a performance improvement ranging between $72.3\%$ (for $K=4, N=6$) to $234.4\%$ (for $K=10, N=6$) relative to the greedy scheme. Further inspection of Fig.~\ref{fig:difbuyer} reveals that any change in the numbers of buyers does not impact the increasing average utility rate of our algorithm, while the greedy algorithm reaches a maximum when the number of buyers are similar to that of sellers.

\begin{figure}[!t]
  \begin{center}
   \vspace{-0.2cm}
    \includegraphics[width=7cm]{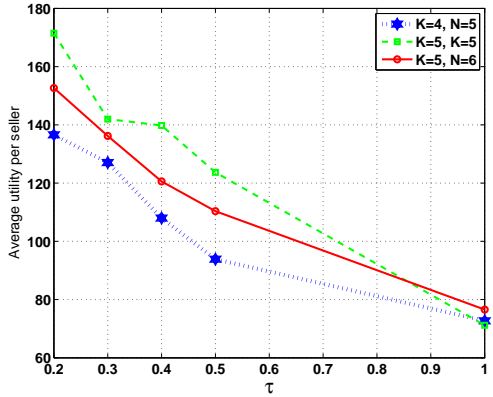}
    \vspace{-0.45cm}
    \caption{\label{fig:tau} Performance assessment in terms of penalty factor $\tau$ resulting from the proposed game approach for different buyers and sellers.}
  \end{center}\vspace{-0.5cm}
\end{figure}

\begin{figure}[!t]
  \begin{center}
   \vspace{-0cm}
    \includegraphics[width=7cm]{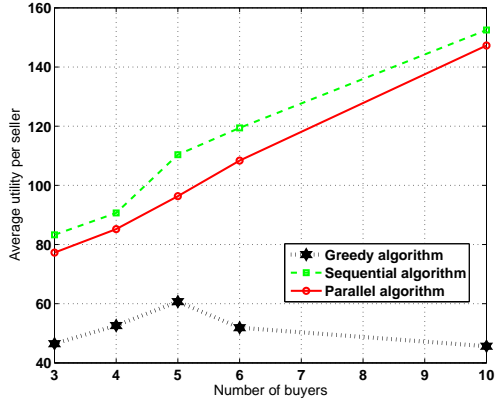}
    \vspace{-0.4cm}
    \caption{\label{fig:difbuyer} Performance assessment in terms of average utility per seller as the number of buyers $K$ varies, $N=6$~sellers.}
  \end{center}\vspace{-0.8cm}
\end{figure}

Fig.~\ref{fig:stateschange} shows, for a smart grid with $K=3$~buyers and $N=4$~sellers, the state of charge represented by the battery amount per player resulting from a time-dependent game as time evolves. Here, we show the results for the proposed sequential algorithm with a weight $w=0.5$, without loss of generality. For our simulations, we assume that the period corresponds to $1$ hour as is typical in a residential community ~\cite{shao2010impact}. In Fig.~\ref{fig:stateschange}, we can see that, for the proposed game, four sellers would sell their energy in the market during the first time instant. Then, after the first run, all players reconsider their roles and still participate in the market, especially for those who did not sell/buy enough in the previous time slots. The iteration would then lead to a new trading price and this process continues, as previous users are still in the market and no new players join in this group. In Fig.~\ref{fig:stateschange}, we can see that all players have an opportunity to act as sellers or buyers every hour. Player $1$, for example, acts as a buyer during the first hour. After the second hour, this player changes from a buyer to a seller, and then acts as a buyer at the fifth hour. Player $4$ sells a small amount in the first three hours and becomes a seller at the fourth and sixth time instants. During the second and third hours, Player $2$ does not sell a large amount despite the fact that it had acted as a buyer and fully charged at the first time instant. After $4$ hours have elapsed, Player 2 sells $12.8$ MWh. This player tends to sell the energy because it reaches its battery limitation and has an incentive to become a seller after the first hour. 

\begin{figure}[!t]
  \begin{center}
   \vspace{-0.2cm}
    \includegraphics[width=7cm,height=5.5cm]{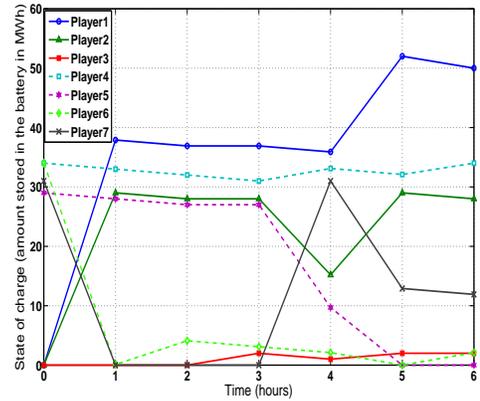}
    \vspace{-0.5cm}
    \caption{\label{fig:stateschange} Performance assessment in terms of the amount per player as the number of runs varies for initial $K=3$~buyers and $N=4$~sellers.}
  \end{center}\vspace{-0.5cm}
\end{figure}

\begin{figure}[!t]
  \begin{center}
   \vspace{-0cm}
    \includegraphics[width=7cm,height=5.4cm]{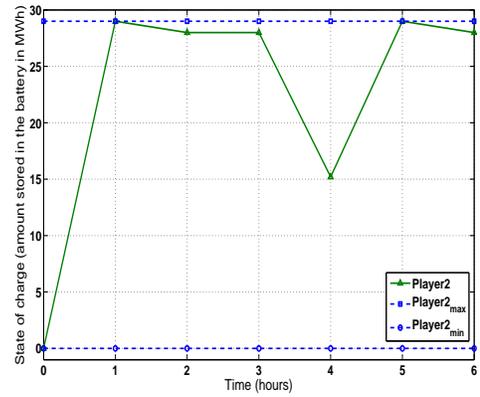}
    \vspace{-0.4cm}
    \caption{\label{fig:Bmax} Performance assessment in terms of the amount of Player $2$ as the number of runs varies for comparing with underlying batter limitation.}
  \end{center}\vspace{-0.8cm}
\end{figure}

In Fig.~\ref{fig:Bmax}, we show how the amount of energy in Player 2's battery changes within this player's minimum and maximum battery capacity. In this figure, we can see that Player $2$ reaches its maximum battery size capacity two times. Correspondingly, this player acts as a seller twice at the second and fourth hour. In contrast to Players 3 and 4, Player $2$ frequently uses its storage unit so as to obtain potential utility through time-dependently buying/selling energy. In essence, Fig.~\ref{fig:Bmax} shows how the proposed game can be used to handle the battery limitations of the users as well as their time-dependent behavior.\vspace{-0.2cm}

\section{Conclusions}\label{sec:conc}\vspace{-0.2cm}
In this paper, we have introduced a novel approach for studying the complex interactions between a number of storage units seeking to sell part of their stored energy surplus to smart grid elements. We have formulated a noncooperative game between the storage units in which each unit strategically chooses the maximum amount of energy surplus that it is willing to sell so as to optimize a utility function that captures the benefits from energy selling as well as the associated costs. To determine the trading price that governs the energy trade market between storage units and smart grid elements, we have proposed an approach based on double auctions, which leads to a strategy-proof outcome. We have shown the existence of a Nash equilibrium and studied its properties. Further, to solve the underlying game, we have proposed a novel algorithm using which the storage units can reach a Nash equilibrium point for our model. Simulation results have shown that the proposed approach enables the storage units to act strategically while improving their average utility. For future work, it is  of interest to extend the model to a dynamic game model in which all players could  time-dependently observe each others' strategies as well as the grid's state and dynamically determine their underlying actions. In this respect, the work done in this paper serves as a basis for developing such a more elaborate dynamic game model in which players can make long-term decisions with regard to their energy trading processes.

\def\baselinestretch{0.9}
\bibliographystyle{IEEEtran}
\bibliography{references}
\nocite{saad2011noncooperative}
\end{document}